\newcommand{\rx}[2]{{\left(\begin{array}{cc}#1 & #2 \\ #2 & #1\end{array}\right)}}
\newcommand{\rz}[2]{{\left(\begin{array}{cc}#1 & 0 \\ 0 & #2\end{array}\right)}}
\newcommand{\matrixtwo}[4]{\left(\begin{array}{cc}#1 & #2 \\ #3 & #4\end{array}\right)}
\begin{document}

\title{The Quantum Approximate Optimization Algorithm\\Can Require Exponential Time to Optimize Linear Functions\thanks{This preprint has been accepted in the Quantum Optimization Workshop at the Genetic and Evolutionary Computation Conference (GECCO 2025). The accepted version can be found at \doi{10.1145/3712255.3734319}. This research is partially funded by the University of Malaga under grant PAR 4/2023.}}

%
%
\author{Francisco Chicano\inst{1}\orcidlink{0000-0003-1259-2990} \and
Zakaria Abdelmoiz Dahi\inst{2}\orcidlink{0000-0001-8022-4407} \and
Gabriel Luque\inst{1}\orcidlink{0000-0001-7909-1416}
} 
\authorrunning{F. Chicano, Z. Dahi, and G. Luque}
%
\institute{ITIS Software, University of Malaga, Spain 
\and
Univ. Lille, Inria, CNRS, Centrale Lille, UMR 9189 CRIStAL, F-59000 Lille, France\\
\email{chicano@uma.es, abdelmoiz-zakaria.dahi@inria.fr, gluque@uma.es}
}

\maketitle

\begin{abstract}
QAOA is a hybrid quantum-classical algorithm to solve optimization problems in gate-based quantum computers. It is based on a variational quantum circuit that can be interpreted as a discretization of the annealing process that quantum annealers follow to find a minimum energy state of a given Hamiltonian. This ensures that QAOA must find an optimal solution for any given optimization problem when the number of layers, $p$, used in the variational quantum circuit tends to infinity. In practice, the number of layers is usually bounded by a small number. This is a must in current quantum computers of the NISQ era, due to the depth limit of the circuits they can run to avoid problems with decoherence and noise. In this paper, we show mathematical evidence that QAOA requires exponential time to solve linear functions when the number of layers is less than the number of different coefficients of the linear function $n$. We conjecture that QAOA needs exponential time to find the global optimum of linear functions for any constant value of $p$, and that the runtime is linear only if $p \geq n$. We conclude that we need new quantum algorithms to reach quantum supremacy in quantum optimization.

\end{abstract}


\keywords{
Quantum optimization \and  QAOA \and gate-based quantum computers \and linear functions}

\section{Introduction}

Quantum computing is a computational paradigm based on quantum mechanical properties such as superposition and entanglement, which is thought to provide a computational speedup over classical computing~\cite{google2019}. Such a computational acceleration could have a wide range of applications, among which we find problem optimization. Several techniques have been devised, but due to the Noisy Intermediate Scale (NISQ) nature of nowadays' quantum hardware, the feasibility and reliability of the existing quantum optimizers are still far from ideal. 

One of the most popular algorithms for optimization designed for gate-based machines is the Quantum Approximate Optimization Algorithm (QAOA) \cite{farhi14-qaoa}. It is a hybrid quantum-classical algorithm, where the quantum part is a parameterized quantum circuit composed of different layers with the same structure, the so-called \emph{ansatz}, while the classical part is an optimization algorithm to tune the ansatz parameters. Mathematically speaking, QAOA is a slight variant of the discretized version of the \emph{quantum annealing} process that quantum annealers (like the ones of D-Wave) do to optimize a Hamiltonian~\cite{PhysRevX.10.021067}. This ensures that as the number of layers, $p$, in the ansatz tends to infinity, the QAOA approximates the continuous path of the annealing process of quantum annealers, reaching a global optimal solution to the optimization problem. 

There are many works on QAOA and different variants have been proposed since 2014. A search in Scopus produces 773 papers with the terms ``QAOA'' or its full name in the title, abstract, or keywords. A recent survey~\cite{ref8} identifies 17 variants of QAOA (see Table 1 in the cited paper). The survey highlights that it is not clear in which problems QAOA has an advantage over classical algorithms, but there are some particular problems for which it is already known that QAOA cannot outperform classical algorithms. One example is the family of MaxCut problems for bipartite $D$-regular graphs.
On the positive list of achievements of QAOA, we can highlight the quantum supremacy result of Farhi and Harrow~\cite{ref9}, where the authors prove that the probability distribution obtained by QAOA for a particular ansatz with one layer ($p=1$) cannot be reproduced efficiently by a classical algorithm.

While we see that QAOA can provably outperform classical algorithms for some particular tasks, these tasks are not always aligned with the ones we would like to solve using quantum computers. Taking as an example the mentioned supremacy of QAOA, the probability distribution it can efficiently obtain (compared to a classical algorithm) is probably not very relevant from a practical point of view. In this paper, in particular, we are interested in solving optimization problems, that is, finding solutions that maximize or minimize an objective function. This is the main task for which QAOA was designed. Thus, we are interested in evaluating the performance of QAOA when applied to optimization problems. 

After revising some background in Section~\ref{sec:background}, we provide some initial results on this analysis in Section~\ref{sec:proposal} using the simplest objective functions we can imagine: linear functions. If QAOA faces problems in optimizing linear functions, many of the more complex families of problems could also be challenging for QAOA. For example, the family of Quadratic Unconstrained Binary Optimization (QUBOs) contains linear functions as a particular case. We find that QAOA can require an exponential number of measurements (and time) to find the optimal solution to linear functions. We prove this result for low-depth QAOA where $p=1$ or $p=2$.
We discuss the consequences of this result in Section~\ref{sec:discussion}, where we also provide ideas for designing potential quantum algorithms that could outperform classical algorithms in optimization. Section~\ref{sec:conclusions} concludes the paper pointing to future research directions.




\section{Background}
\label{sec:background}
This section presents some mathematical preliminaries, as well as some fundamentals on quantum computing and QAOA. 

\subsection{Mathematical preliminaries}
\label{subsec:maths}

We will use the notation $[n]=\{1,\ldots,n\}$ to refer to a set of consecutive integers. If we want to emphasize that the consecutive numbers form a tuple we use $(n)=(1,2,\ldots,n)$. Dirac notation will be used along the paper, where \emph{kets}, denoted with $\ket{\psi}$, are \emph{column vectors} and \emph{bras}, denoted with $\bra{\varphi}$, are \emph{row vectors}. \emph{Matrices} (or \emph{linear operators}) will be represented with latin capital letters. In general, vectors and matrices are complex-valued. We denote the \emph{imaginary unit} with $i = \sqrt{-1}$ along the paper and will avoid the use of the lowercase latin letter $i$ for indices to prevent any confusion. If $a$ is a complex number, we can write it as $a= |a| e^{i \alpha}$, where $|a|$ is the \emph{modulus} of $a$ and $\alpha$ its phase. The \emph{complex conjugate} of $a$ is denoted with $a^* = |a|e^{-i \alpha}$. The \emph{conjugate transpose} of a matrix $A$ will be denoted with $A^{\dagger}$ and is computed from $A$ by transposing it and using the complex conjugate of its coefficients, that is, $A^\dagger = (A^T)^*$. In Dirac's notation, the conjugate transpose of a bra is a ket denoted with the same symbol and vice versa. That is, if $\ket{\psi}=(a, b, c, \ldots)^T$, then $\bra{\psi}=(a^*,b^*,c^*, \ldots)$. The \emph{dot product} of bra $\bra{\varphi}$ and ket $\ket{\psi}$ is denoted with $\braket{\varphi|\psi}$. The \emph{norm} of a ket $\ket{\psi}$ is $\lVert\ket{\psi} \rVert=\sqrt{\braket{\psi|\psi}}$. We will work mainly with \emph{normalized vectors}, which have norm 1. In this paper, we will only consider complex-valued finite vector spaces with a dot product. All these spaces are \emph{complete} (the elements of any Cauchy sequence converge to an element of the space) and are called \emph{Hilbert spaces}. 

We say that a matrix $U$ is \emph{unitary} if it satisfies $U U^{\dagger} = U^{\dagger} U = I$, where $I$ is the identity matrix. This means that the inverse of $U$ is $U^{-1}=U^{\dagger}$. We say that matrix $H$ is \emph{Hermitian} if $H=H^{\dagger}$. Two matrices, $A$ and $B$, \emph{commute} if $A B = B A$. The \emph{commutator} of $A$ and $B$ is $[A,B]=A B-B A$.
If $A$ is a square matrix, the \emph{matrix exponential} $e^A$ is defined based on the power series
\begin{equation}
    e^A = \sum_{j=0}^{\infty} \frac{1}{j!} A^j ,
\end{equation}
where $A^0=I$, the identity matrix.
In general, it is not true that $e^{A+B}=e^A e^B$, but if $A$ and $B$ commute the equality holds. If $H$ is a Hermitian matrix, then $e^{-i H}$ is unitary.

Given two vector spaces $V$ and $W$, the \emph{tensor product}, denoted with $V \otimes W$ is the vector space composed of terms with the form $\ket{\varphi} \otimes \ket{\psi}$ where $\ket{\varphi} \in V$, $\ket{\psi} \in W$, and the tensor product $\otimes$ applied to vectors is a multi-linear operation. 
If $A$ is a linear operator defined in vector space $V$ and $B$ is a linear operator defined in vector space $W$, then $A \otimes B$ is a linear operator acting on the tensor product space $V \otimes W$ as $(A \otimes B)(\ket{\varphi} \otimes \ket{\psi})=(A \ket{\varphi}) \otimes (B \ket{\psi})$. As a consequence, if $A$ and $C$ are linear operators in $V$ and $B$ and $D$ are linear operators in $W$ we have $(A\otimes B)(C\otimes D)=(AC) \otimes (BD)$.

\subsection{Gate-based quantum computing}
\label{subsec:qc}

A \emph{qubit} is a quantum mechanical system with two states that can be manipulated. It is the minimum building block for quantum computations.
Mathematically, the two orthogonal basis states we can observe after measuring a qubit are represented by vectors $\ket{0}$ and $\ket{1}$, and they belong to a two-dimensional complex vector space which forms also a \emph{Hilbert space} with the dot product defined in Section~\ref{subsec:maths}. In general, the state of a qubit is represented by a vector $\ket{\psi} = \alpha \ket{0}+\beta \ket{1}$, where $\alpha$ and $\beta$ are the (complex-valued) probability amplitudes of the states $\ket{0}$ and $\ket{1}$. If we measure a qubit in this state, the values 
$|\alpha|^{2}$ and $|\beta|^{2}$ give the probabilities of observing state 0 or 1, respectively, and they fulfill $|\alpha|^{2} + |\beta|^{2}=1$. Once the qubit is measured, it will \emph{collpase} to the measured basis state: $\ket{0}$ or $\ket{1}$. The quantum state of $n$ qubits is represented with the tensor product of $n$ Hilbert spaces representing one qubit.

Several quantum computing paradigms exist, although in this work we focus on the discrete gate-based quantum computers~\cite{ref1}. This paradigm describes quantum computations as quantum circuits, which can be seen as a series of quantum gates acting on a set of qubits. A quantum gate is a unitary transformation $U$ acting on one or more qubits. 
The single-qubit quantum gates relevant in this paper are the Hadamard gate \texttt{H}, the $X$ rotation gate \texttt{RX} and the $Z$ rotation gate \texttt{RZ}. These three gates together with the two-qubits \texttt{CNOT} gate form a \emph{universal set of gates}\footnote{In fact, the Hadamard gate can be built with the $X$ and $Z$ rotations.}. The Pauli \texttt{X} and \texttt{Z} gates are $X$ and $Z$ rotations with angle $\pi$ (times an irrelevant phase constant). The Hadamard, Pauli \texttt{X}, and Pauli \texttt{Z} gates are Hermitian and their own inverses: $\texttt{H}\texttt{H}=\texttt{X}\texttt{X}=\texttt{Z}\texttt{Z}=I$. We present below the mathematical definition of these gates in matrix form:

\begin{eqnarray}
    && \texttt{H} = \frac{1}{\sqrt{2}} \matrixtwo{1}{1}{1}{-1}, \\
    && \texttt{RX}(\theta) = e^{-i \frac{\theta}{2} \texttt{X}} =  \rx{\cos \left(\theta/2\right)}{-i \sin \left(\theta/2\right)}, \\
    && \texttt{RZ}(\theta) = e^{-i \frac{\theta}{2} \texttt{Z}} = \rz{e^{-i \theta/2}}{e^{i \theta/2}}, \\
    && \texttt{CNOT} = \left(\begin{array}{cccc}
    1 & 0 & 0 & 0 \\
    0 & 1 & 0 & 0 \\
    0 & 0 & 0 & 1 \\
    0 & 0 & 1 & 0
    \end{array}\right) .
\end{eqnarray}




Using the exponential matrix definition of \texttt{RX} and \texttt{RZ} it is easy to see that $\texttt{RX}(\theta_1)\texttt{RX}(\theta_2) = \texttt{RX}(\theta_1 + \theta_2)$ and $\texttt{RZ}(\theta_1)\texttt{RZ}(\theta_2) = \texttt{RZ}(\theta_1 + \theta_2)$. This implies that \texttt{RX} gates commute among them and \texttt{RZ} gates also commute. It also implies that the inverse of $\texttt{RX}(\theta)$ is $\texttt{RX}(-\theta)$ and the same applies to the \texttt{RZ} gate. Vectors $\ket{0}$ and $\ket{1}$ are \emph{eigenvectors} of the \texttt{Z} gate with \emph{eigenvalues} $+1$ and $-1$, respectively. That is, $\texttt{Z} \ket{0} = \ket{0}$ and $\texttt{Z} \ket{1} = - \ket{1}$. The eigenvectors of the \texttt{X} gate also form a basis of the 1-qubit Hilbert space and they are denoted with $\ket{+}$ and $\ket{-}$, with eigenvalues $+1$ and $-1$, respectively. The Hadamard gate transforms one of the basis into the other: $\texttt{H} \ket{0} = \ket{+}$ and $\texttt{H} \ket{1} = \ket{-}$.

\subsection{Quantum optimizers and QAOA}
\label{subsec:qaoa}

Quantum technology is still in very early stages in terms of stability and scalability. Considering the Noisy-Intermediate Scale (NISQ) state of today's quantum hardware, it is very tricky to execute complex and fault-tolerant calculations. Bearing in mind the noise and qubits' limitations, \emph{Variational Quantum Algorithms} (VQAs) is a class of quantum algorithms that is showing feasible applicability, although running on NISQ machines. The VQAs rely on a layered application of a quantum ansatz (see Figure \ref{fig:vqe}). Depending on the design and purpose of the algorithm, the mathematical description of each ansatz might differ. For instance, when analyzing the \emph{variational quantum classifier}, usually part of the ansatz is responsible for data encoding, while the second part deals with the classification task \cite{ref2}. Another example quite related to this work is the \emph{Variational Quantum Eigensolver} (VQE) originally proposed in \cite{ref3}. An VQE is a hybrid quantum-classical optimizer that relies on both a quantum and a classical algorithm (Figure~\ref{fig:vqe}). The quantum part of the VQE is a parameterized quantum ansatz $U$($\Theta$) whose parameters $\Theta = (\theta_1, \dots, \theta_n)$ are evolved using a classical optimizer that optimizes a Hamiltonian $H$.

\begin{figure}[ht!]
    \centering
    \includegraphics[width=0.72\linewidth]{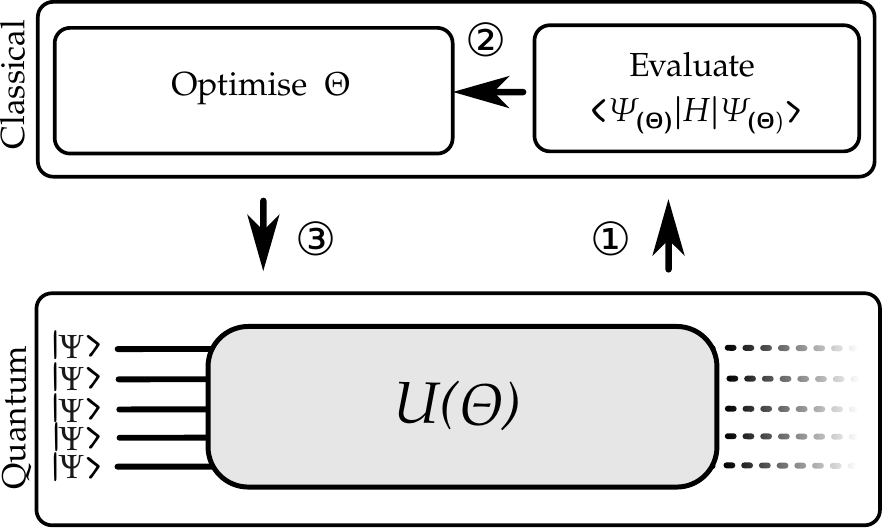}
    \caption{Variational quantum algorithm.}
    \label{fig:vqe}
\end{figure}

One of the VQAs for combinatorial optimization is the \emph{Quantum Approximate Optimization Algorithm} (QAOA) \cite{farhi14-qaoa}, a hybrid quantum-classical optimizer where the quantum part is composed of $p$ layers, each one containing the so-called \emph{problem unitary} and \emph{mixer unitary} (see Figure~\ref{fig:qao}). These unitary operations are defined as follows:
\begin{eqnarray}
    && U(H_{P}, \gamma_{j}) = e^{-i \gamma_j H_P} , \\
    && U(H_{M}, \beta_{j}) = e^{-i \beta_j H_M} ,
\end{eqnarray}
where $H_P$ is the \emph{problem Hamiltonian}, which is the objective function we want to optimize; and $H_M$ is the mixer Hamiltonian, which is a sum of \texttt{X} gates, one for each qubit:
\begin{equation}
    H_M = \sum_{j=1}^n \texttt{X}_j \; .
\end{equation}

\begin{figure}[ht!]
    \centering
    \includegraphics[width=0.8\linewidth]{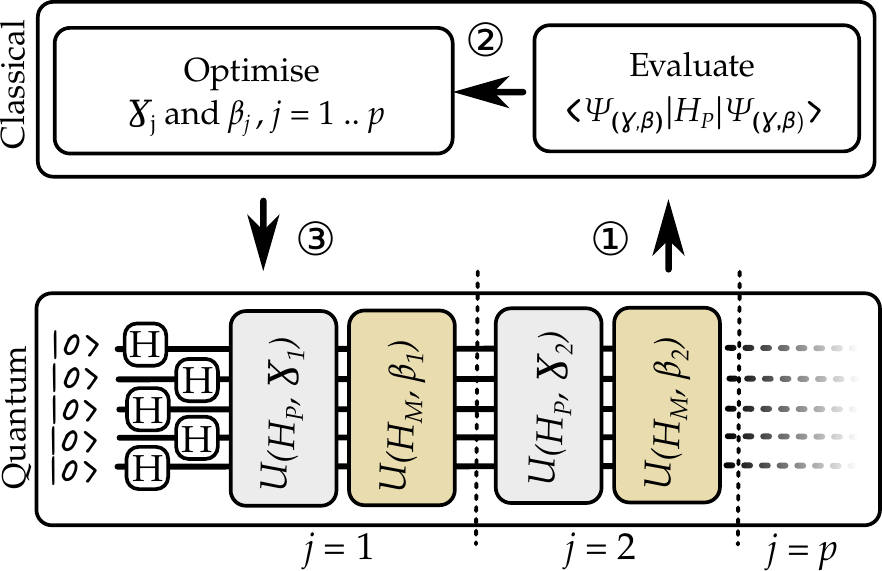}
    \caption{The QAOA's execution workflow}
    \label{fig:qao}
\end{figure}

The parameters of the ansatz are the two vectors $\gamma=(\gamma_{1}, \dots ,\gamma_{p})$ and $\beta=(\beta_{1}, \dots ,\beta_{p})$. The state obtained after applying the QAOA ansatz is given by
\begin{equation}
\label{eq:psi}
    \ket{\psi(\gamma, \beta)} = \left(\prod_{j=p}^1 U(H_{M}, \beta_{j}) U(H_{P}, \gamma_{j}) \right) \texttt{H}^{\otimes n} \ket{0\ldots 0} .
\end{equation}

QAOA relies on a classical optimizer to find the vectors $\gamma$ and $\beta$ that optimize the value of 
\begin{equation}
\label{eq:ev}
    E{(\gamma,\beta)} = \bra{\psi{(\gamma,\beta)}} H_P \ket{\psi{(\gamma,\beta)}}.
\end{equation}




\section{The runtime of QAOA for linear functions}
\label{sec:proposal}

In this section we present the main result of the paper: the runtime analysis of QAOA for linear functions. 
We assume the goal is to maximize the linear Ising model
\begin{equation}
\label{eqn:linear-ising}
    H_P(s) = \sum_{\ell=1}^{n} a_\ell s_\ell,
\end{equation}
where $a_\ell$ are real nonzero coefficients and $s_\ell$ are spin variables taking values in $\{-1,1\}$. The solution maximizing Equation~\eqref{eqn:linear-ising} is
\begin{equation}
\label{eqn:solution-linear}
    s_\ell=sign(a_\ell) \; ,
\end{equation}
for all $\ell \in [n]$, where $sign(a)$ is $-1$ for $a<0$ and $1$ for $a>0$. Knowing the optimal solution to the linear Ising model simplifies the computation of the exact probability for the ansatz to reach the global optimum.

As explained in Section~\ref{subsec:qaoa}, in QAOA there is a classical algorithm optimizing the average value of the problem Hamiltonian $H_P$, that is, Equation~\eqref{eq:ev}. For each iteration of this classical algorithm, the ansatz has to be measured several times, contributing to the runtime of QAOA. 
Thus, the runtime not only depends on the ansatz design, but also on the classical optimization algorithm. This makes it difficult to do a precise analysis of QAOA.
For the sake of simplicity we will omit the effect of the classical optimization algorithm here. 
We assume that in the classical part there is an \emph{oracle} which provides the optimal values for the ansatz parameters: the vectors $\gamma$ and $\beta$. In practice, this means that the runtime we will obtain is a \emph{lower bound} of the real runtime of QAOA for any classical optimization algorithm. The fact that we compute a lower bound instead of the precise runtime will not change the main conclusions of this paper because, as we will see shortly, this lower bound is exponential  in the number of qubits $n$ when $p$ is bounded by a constant. With the optimal $\gamma$ and $\beta$ parameters provided by the oracle, we only need to sample the ansatz to get the solutions. We will compute the probability of obtaining the optimal solution by sampling the state $\ket{\psi(\gamma, \beta)}$ provided by Equation~\eqref{eq:psi}. The inverse of this probability is the average number of samples we need to reach the global optimal solution (according to the geometric probability distribution).

Let us start by developing Equation~\eqref{eq:psi} taking into account that the problem Hamiltonian, $H_P$, is the linear Ising model of Equation~\eqref{eqn:linear-ising}:
\begin{align}
\label{eqn:psi-linear}
    \ket{\psi(\gamma, \beta)} &= \left(\prod_{j=p}^1 e^{-i \beta_j H_M} \cdot e^{-i \gamma_j H_P}\right) \texttt{H}^{\otimes n} \ket{0\ldots0} \\
\nonumber    &= \left(\prod_{j=p}^1 e^{-i \beta_j \sum_{\ell=1}^n \texttt{X}_\ell} \cdot e^{-i \gamma_j \sum_{\ell=1}^{n} a_\ell \texttt{Z}_\ell}\right) \bigotimes_{\ell=1}^n \ket{+}_\ell \\
\nonumber    &= \left(\prod_{j=p}^1 \left(\bigotimes_{\ell=1}^{n} e^{-i \beta_j \texttt{X}_\ell}\right) \left(\bigotimes_{\ell=1}^{n} e^{-i \gamma_j a_\ell \texttt{Z}_\ell}\right) \right) \bigotimes_{\ell=1}^n \ket{+}_\ell \\
\nonumber    &= \left(\prod_{j=p}^1 \left(\bigotimes_{\ell=1}^{n} \texttt{RX}_\ell(2 \beta_j)\right) \left(\bigotimes_{\ell=1}^{n} \texttt{RZ}_\ell (2 \gamma_j a_\ell)\right) \right) \bigotimes_{\ell=1}^n \ket{+}_\ell ,
\end{align}
where we used the fact that $\texttt{H}\ket{0}=\ket{+}$ in the first step,
the linearity of $H_P$ and $H_M$ allowed us to use the tensor product in the second step, and we expressed the exponentials as rotations in $X$ and $Z$ in the last step. Now we can use the fact that $(A \otimes B)(C \otimes D)=(A C) \otimes (B D)$ to express $\ket{\psi(\gamma, \beta)}$ as:
\begin{align}
    \ket{\psi(\gamma, \beta)} &= \bigotimes_{\ell=1}^{n} \left(\prod_{j=p}^1 \left( \texttt{RX}_\ell(2 \beta_j) \; \texttt{RZ}_\ell (2 \gamma_j a_\ell) \right) \ket{+}_\ell \right) .
\end{align}

Let us now introduce the assumption that all the coefficients in the linear function are positive ($a_\ell>0$). Then, the solution for the spin variables is $s_\ell = 1$, and, in the computational basis, the global optima is state $\ket{0\ldots 0}$. The probability of getting the optimal solution after the measurement is:
\begin{align}
    \nonumber Pr_{opt}(a, \gamma, &\beta) = \left|\braket {0\ldots 0 | \psi(\gamma, \beta)}\right|^2 \\
    \nonumber &= \left| \left(\bigotimes_{\ell=1}^{n} \bra{0}_l \right) \bigotimes_{\ell=1}^{n} \left(\prod_{j=p}^1 \left( \texttt{RX}_\ell(2 \beta_j) \; \texttt{RZ}_\ell (2 \gamma_j a_\ell) \right) \ket{+}_\ell \right) \right|^2 \\
    \nonumber &= \left| \bigotimes_{\ell=1}^{n} \left( \bra{0}_l\prod_{j=p}^1 \left( \texttt{RX}_\ell(2 \beta_j) \; \texttt{RZ}_\ell (2 \gamma_j a_\ell) \right) \ket{+}_\ell \right) \right|^2 \\
\label{eqn:prob_opt}
     &= \prod_{\ell=1}^{n} \left|  \bra{0}\prod_{j=p}^1 \left( \texttt{RX}(2 \beta_j) \; \texttt{RZ} (2 \gamma_j a_\ell) \right) \ket{+}  \right|^2,
\end{align}
where in the last equality we removed the subindices to the states and operators because the modulus is independent of the particular qubits used in the computation.


\begin{theorem}[Exponential runtime of QAOA with constant $p$ for linear functions]
\label{thm:exponential-qaoa-fixed-p}
    Let the number of layers in the ansatz of QAOA, $p$, be a constant independent of the problem size $n$; and let $a$ be a vector of $m$ positive real values representing a linear Ising model as expressed by Equation~\eqref{eqn:linear-ising}. If $Pr_{opt}(a, \gamma, \beta) < 1$ for all the possible vectors $\gamma$ and $\beta$ of size $p$, then we can build a family of linear Ising models for which the QAOA runtime grows exponentially with the problem size $n$.
\end{theorem}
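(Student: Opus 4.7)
The plan is to exploit the product-over-qubits form of the success probability given by Equation~\eqref{eqn:prob_opt}: since each factor depends only on a single coefficient $a_\ell$, repeating coefficients in $a$ amounts to raising the original probability to a power. Given the fixed vector $a$ of $m$ positive values, I would define a family of Ising models indexed by $k\ge 1$ by taking $a^{(k)}$ to be the concatenation of $k$ copies of $a$, so that $a^{(k)}$ has $n=km$ entries and still uses only the same $m$ distinct values.

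First, I would substitute $a^{(k)}$ into Equation~\eqref{eqn:prob_opt}: since the product over $\ell$ groups identical coefficients into identical factors, it collapses to
\begin{equation}
Pr_{opt}(a^{(k)}, \gamma, \beta) \;=\; Pr_{opt}(a, \gamma, \beta)^{k}.
\end{equation}
Second, I would show that the maximum of $Pr_{opt}(a,\cdot,\cdot)$ over all admissible $\gamma,\beta$ is some constant $c<1$. The map $(\gamma,\beta)\mapsto Pr_{opt}(a,\gamma,\beta)$ is continuous in both arguments, and using the periodicity of $\texttt{RX}$ and $\texttt{RZ}$ the effective parameter domain can be reduced to a compact set, so the extreme value theorem yields an attained maximum; by the hypothesis of the theorem this maximum is strictly less than $1$, giving the desired constant $c$.

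Combining these two steps, even when the oracle supplies the optimal QAOA parameters for $a^{(k)}$, the success probability is at most $c^{k}=c^{n/m}$. Because the expected number of independent samples needed to observe the optimum follows a geometric distribution with success probability $Pr_{opt}$, the oracular runtime on this family is at least $(1/c)^{n/m}$, which is exponential in $n$ since $m$ and $c$ are fixed constants. Since any classical outer optimizer can only increase the required number of ansatz evaluations, this lower bound transfers to the full QAOA runtime.

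The step I expect to be the main obstacle is the compactness argument used to upgrade the pointwise hypothesis $Pr_{opt}<1$ to the uniform bound $c<1$. For $\beta$, reduction to a compact interval follows directly from periodicity in each $\beta_j$. For $\gamma$, however, the factors depend on the products $\gamma_j a_\ell$, and if the $a_\ell$ are rationally incommensurable, the orbit of $\gamma_j$ in the torus $(S^1)^m$ is only dense, not compact. To close this gap one may invoke density together with continuity to conclude that $\sup_{\gamma\in\mathbb{R}^p} Pr_{opt}$ coincides with the maximum over the torus closure, which is attained and therefore, by the hypothesis, strictly less than $1$.
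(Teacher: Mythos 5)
Your core argument is exactly the paper's: build the family $a^{(k)}$ by concatenating $k$ copies of $a$, use the per-qubit factorization in Equation~\eqref{eqn:prob_opt} to collapse the success probability to $Pr_{opt}(a,\gamma,\beta)^{k}$, and invoke the geometric distribution to turn the inverse probability into the runtime bound of Equation~\eqref{eqn:runtime}. Where you go beyond the paper is the attempt to upgrade the pointwise hypothesis $Pr_{opt}(a,\gamma,\beta)<1$ to a uniform constant $c<1$. The paper does not do this: it simply evaluates the hypothesis at the oracle-supplied ``best possible'' parameters, implicitly assuming a maximizer exists. Your instinct that something must be said here is sound, because the oracle may choose parameters depending on $k$, so a fixed exponential base really does require $\sup_{\gamma,\beta}Pr_{opt}(a,\gamma,\beta)<1$ rather than mere pointwise strictness (note that since $x\mapsto x^{k}$ is monotone, the optimal parameters for $a^{(k)}$ are those optimal for $a$, so once a maximizer exists the base is independent of $k$).

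However, your proposed repair for incommensurable coefficients does not close the gap you correctly identified. The closure of the orbit $\gamma_j\mapsto(2\gamma_j a_1,\dots,2\gamma_j a_m)$ in the torus is compact and the (continuously extended) probability attains its maximum there, but the theorem's hypothesis only asserts $Pr_{opt}<1$ at points of the orbit itself, i.e., at genuine parameter vectors $(\gamma,\beta)$. The maximum over the closure may be attained at a limit point outside the orbit and may equal $1$ there; in that case $\sup_{\gamma\in\mathbb{R}^p}Pr_{opt}=1$ without being attained, no constant $c<1$ exists, and the oracle can pick $k$-dependent parameters whose per-copy success probability tends to $1$, so the $(1/c)^{n/m}$ lower bound evaporates. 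One expects exactly this situation for, say, $p=1$ and $a=(1,\sqrt{2})$: each qubit can be rotated from $\ket{+}$ to $\ket{0}$ individually, and density lets both conditions be met approximately but never exactly. So density plus continuity gives you that the supremum over real $\gamma$ equals the maximum over the closure, but not that this maximum is below $1$. The issue disappears for commensurable (e.g., integer) coefficient vectors such as $(1,2)$ and $(1,2,3)$ used in Theorems~\ref{thm:qaoa-p1-linear} and~\ref{thm:qaoa-p2-linear}, where $Pr_{opt}$ is genuinely periodic in each $\gamma_j$ and the extreme value theorem applies on a compact fundamental domain; for arbitrary positive reals you should either strengthen the hypothesis to $\sup_{\gamma,\beta}Pr_{opt}(a,\gamma,\beta)<1$ or restrict the statement accordingly. (The paper's own proof is silent on this point, so your version, once repaired in this way, would actually be the more careful one.)
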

\begin{proof}
We can build the family of linear functions simply using vector $a$ (of length $m$) repeatedly for the coefficients of the new variables. Let us build a new linear Ising model based on vector $a$ of size $n=k \; m$ variables. In the new Ising model, variables $s_1$, $s_{1+m}$, $s_{1+2m}$, up to $s_{1+(k-1)m}$ all are weighted by coefficient $a_1$. In general, the variables in the set $\{s_{q+\ell m} | 0 \leq \ell < k\}$ are weighted by coefficient $a_q$. With a bit of abuse of notation, we denote the vector of coefficients of the new linear Ising model as $a^k$, which recalls that the coefficients are just $k$ copies of the vector $a$ concatenated.

We assume, as justified previously, that we have an oracle providing the best possible values for $\gamma$ and $\beta$, so we do not need to optimize these vectors using a classical algorithm, and we only need to sample the ansatz until we find an optimal solution. Thus, we compute the probability of obtaining the global optima for the linear Ising model characterized by the vector of coefficients $a^k$. Using Equation~\eqref{eqn:prob_opt} we have:
\begin{align}
\nonumber
    Pr_{opt}(a^k, \gamma, \beta) &= \prod_{\ell=1}^{n} \left|  \bra{0}\prod_{j=p}^1 \left( \texttt{RX}(2 \beta_j) \; \texttt{RZ} (2 \gamma_j a^k_\ell) \right) \ket{+}  \right|^2 \\
\nonumber    
    &= \prod_{\ell=0}^{k-1} \prod_{q=1}^{m} \left|  \bra{0}\prod_{j=p}^1 \left( \texttt{RX}(2 \beta_j) \; \texttt{RZ} (2 \gamma_j a^k_{q+\ell m}) \right) \ket{+}  \right|^2 \\
\nonumber    
    &= \prod_{\ell=0}^{k-1} \prod_{q=1}^{m} \left|  \bra{0}\prod_{j=p}^1 \left( \texttt{RX}(2 \beta_j) \; \texttt{RZ} (2 \gamma_j a_{q}) \right) \ket{+}  \right|^2 \\
\nonumber    
    &= \prod_{\ell=0}^{k-1} Pr_{opt}(a, \gamma, \beta) \\
    &= \left(Pr_{opt}(a, \gamma, \beta)\right)^k,
\end{align}
where we used the fact that $a^k_{q+\ell m} = a_q$ in the second step.

Since, by assumption, we have $Pr_{opt}(a, \gamma, \beta)<1$, the probability of finding the global optimal solution $Pr_{opt}(a^k, \gamma, \beta)$ decreases exponentially with $k=n/m$. Using basic results of the geometric probability distribution,  the average number of times the ansatz has to be measured to find the global optimal solution is:
\begin{equation}
\label{eqn:runtime}
    T(a^k) = \left(\frac{1}{\sqrt[m]{Pr_{opt}(a, \gamma, \beta)}}\right)^{n},
\end{equation}
where the expression within the parentheses is larger than 1 by hypothesis. This means the runtime of QAOA is exponential in the problem size $n$.
\end{proof}
    
Theorem~\ref{thm:exponential-qaoa-fixed-p} contains a big assumption that is at the core of the exponential runtime result. The assumption is that there is a vector $a$ of coefficients for a linear Ising model such that $Pr_{opt}(a, \gamma, \beta) < 1$ for all $\gamma$ and $\beta$ of length $p$. Can we find such a vector given a fixed~$p$? We do it here for small values of $p$. Subsection~\ref{subsec:empirical} will show the results of an experiment that suggests that the sequence of initial consecutive integers $a=(p+1)$, that is, $1, 2, \ldots, p+1$, forms such a vector. We will prove here that this is the case for $p=1$ and $p=2$. We do not have, at the moment, a proof for $p > 2$, only the results shown in Subsection~\ref{subsec:empirical}.




\begin{theorem}
\label{thm:qaoa-p1-linear}
    The probability of measuring the optimal solution of the linear Ising model with coefficients $a=(2)=(1,2)$ using the QAOA ansatz with $p=1$ holds $Pr(a, \gamma, \beta) < 1$ for all $\gamma$ and $\beta$.
\end{theorem}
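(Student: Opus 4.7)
The plan is to reduce the claim to a tractable trigonometric identity by evaluating the single-qubit amplitude $\bra{0}\texttt{RX}(2\beta)\texttt{RZ}(2\gamma c)\ket{+}$ for a generic positive coefficient $c$, and then specialize to $c \in \{1, 2\}$. Using the matrix forms of \texttt{RX} and \texttt{RZ} from the background section together with $\ket{+} = (\ket{0}+\ket{1})/\sqrt{2}$, I would compute this amplitude directly and then square its modulus. The product structure in Equation~\eqref{eqn:prob_opt} will then reduce $Pr_{opt}(a, \gamma, \beta)$ for $p=1$ to a product of two such single-qubit probabilities, one for each coefficient.

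The key computation I expect to obtain is
\begin{equation}
\left|\bra{0}\,\texttt{RX}(2\beta)\,\texttt{RZ}(2\gamma c)\,\ket{+}\right|^{2} = \tfrac{1}{2}\bigl(1 + \sin(2\beta)\sin(2\gamma c)\bigr).
\end{equation}
This is obtained by pushing $\texttt{RZ}(2\gamma c)$ onto $\ket{+}$, applying $\texttt{RX}(2\beta)$, reading off the first coordinate, expanding $|e^{-i\gamma c}\cos\beta - i e^{i\gamma c}\sin\beta|^{2}$, and collapsing the cross terms with the identity $e^{2i\gamma c} - e^{-2i\gamma c} = 2i\sin(2\gamma c)$. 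Each such factor is visibly bounded above by $1$, so for $a=(1,2)$ the full probability $Pr_{opt}(a,\gamma,\beta) = \tfrac{1}{4}(1+\sin(2\beta)\sin(2\gamma))(1+\sin(2\beta)\sin(4\gamma))$ attains the value $1$ only if both factors saturate simultaneously.

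The final step is a short incompatibility argument. Saturation forces $\sin(2\beta) \in \{-1,+1\}$ and $\sin(2\gamma) = \sin(4\gamma) = \sin(2\beta)$. But $\sin(2\gamma) = \pm 1$ implies $2\gamma \equiv \pm \pi/2 \pmod{2\pi}$, hence $4\gamma \equiv \pm \pi \pmod{4\pi}$, so $\sin(4\gamma) = 0$. This contradicts the requirement $\sin(4\gamma) = \pm 1$, and therefore $Pr_{opt}(a,\gamma,\beta) < 1$ for every choice of $\gamma$ and $\beta$.

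I do not anticipate a serious obstacle here: the amplitude calculation is mechanical and the incompatibility between the conditions on $\sin(2\gamma)$ and $\sin(4\gamma)$ is immediate. The mildly delicate point is simply to check carefully that the bound $\tfrac{1}{2}(1+\sin(2\beta)\sin(2\gamma c)) \leq 1$ forces \emph{both} single-qubit factors to equal $1$ when the product does, which is what reduces the theorem to the one-variable trigonometric obstruction above. This same template will presumably also drive the $p=2$ case, where the analogous saturation conditions become a richer, but still finite, system of trigonometric equations.
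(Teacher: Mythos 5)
Your proof is correct, and it takes a more computational route than the paper. You derive the closed form $\left|\bra{0}\texttt{RX}(2\beta)\texttt{RZ}(2\gamma c)\ket{+}\right|^{2}=\tfrac{1}{2}\bigl(1+\sin(2\beta)\sin(2\gamma c)\bigr)$ (which checks out), note that a product of two numbers in $[0,1]$ equals $1$ only if both factors do, and then observe that $\sin(2\gamma)=\pm 1$ forces $\cos(2\gamma)=0$ and hence $\sin(4\gamma)=0$, killing the second saturation condition. The paper instead argues by contradiction at the level of states: if both qubits return $0$ with certainty, then $\ket{\psi(2\gamma,\beta)}$ must equal $\ket{\psi(\gamma,\beta)}$ up to a global phase, so $\braket{\psi(\gamma,\beta)|\psi(2\gamma,\beta)}=\bra{+}\texttt{RZ}(2\gamma_1)\ket{+}=\cos\gamma_1$ must have modulus one, forcing $\gamma_1=k\pi$, after which the amplitude $|\braket{0|\psi((k\pi),\beta)}|=1/\sqrt{2}$ gives the contradiction. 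Your version buys an explicit probability formula $\tfrac{1}{4}(1+\sin 2\beta\sin 2\gamma)(1+\sin 2\beta\sin 4\gamma)$, which also makes the maximization (the paper's $\approx 0.882385$ via Gr\"obner bases in the supplementary material) and the generalization to arbitrary coefficient vectors at $p=1$ transparent; the paper's overlap/proportionality argument avoids amplitude algebra and is the template it then reuses for $p=2$, where explicit closed forms become unwieldy, so your ``same template'' hope for $p=2$ will demand noticeably more work than at $p=1$.
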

\begin{proof}
    Observe that $\gamma$ and $\beta$ are vectors of size 1 in this case and they could be interpreted as scalars, but we will use the notation $\gamma_1$ to refer to the only element of vector $\gamma$ in the following. 
    
    We prove the claim by contradiction. Assume that $Pr(a, \gamma, \beta) = 1$.  As a consequence, the probability of measuring 0 in each of the qubits must also be 1:
    \begin{eqnarray}
    \label{eqn:0-psi}
        && \left| \braket{0 |\psi(\gamma, \beta)} \right|^2 = \left| \bra{0} \texttt{RX}(2 \beta_1) \; \texttt{RZ} (2 \gamma_1) \ket{+}\right|^2 = 1, \\
        && \left| \braket{0 |\psi(2\gamma, \beta)} \right|^2 = \left| \bra{0} \texttt{RX}(2 \beta_1) \; \texttt{RZ} (4 \gamma_1) \ket{+}\right|^2 = 1 .
    \end{eqnarray}

    Since all the quantum states are normalized, this implies that the states $\ket{\psi(\gamma,\beta)}$ and $\ket{\psi(2\gamma,\beta)}$ are the same except for a possible phase factor:  $\ket{\psi(2 \gamma,\beta)} = e^{i\alpha} \ket{\psi(\gamma,\beta)}$. If we multiply both sides by $\bra{\psi(\gamma,\beta)}$ we can write $\braket{\psi(\gamma,\beta)|\psi(2\gamma,\beta)} = e^{i\alpha}$. Developing this expression we have:
    \begin{align}
    \nonumber    \braket{\psi(\gamma,\beta)|\psi(2\gamma,\beta)} &= 
            \bra{+} \texttt{RZ} (-2 \gamma_1)  \texttt{RX}(-2 \beta_1) \texttt{RX}(2 \beta_1)  \texttt{RZ} (4 \gamma_1) \ket{+} \\
    \nonumber        &= \bra{+} \texttt{RZ} (-2 \gamma_1) \; \texttt{RZ} (4 \gamma_1) \ket{+} \\
    \nonumber        &= \bra{+} \texttt{RZ} (2 \gamma_1) \ket{+} \\
    \nonumber        &= \frac{1}{2}\left(\begin{array}{cc}1 & 1 \end{array}\right) \rz{e^{-i \gamma_1}}{e^{i \gamma_1}}\left(\begin{array}{c}1 \\ 1\end{array}\right) \\
                     &= \frac{e^{-i \gamma_1}+e^{i \gamma_1}}{2} = \cos \gamma_1 .
    \end{align}

    The previous expression has modulus one only if $\gamma_1 = k \pi$ for some integer $k$. Now let us compute the modulus of $\braket{0 | \psi((k \pi), \beta)}$:
    \begin{align}
\nonumber        \left|\braket{0 | \psi((k \pi), \beta)} \right| &= \left|\bra{0} \texttt{RX}(2 \beta_1) \texttt{RZ}(2 k \pi) \ket{+} \right|\\
\nonumber            &= \left|(-1)^{k} \bra{0} \texttt{RX}(2 \beta_1) \ket{+} \right| \\
\nonumber            &= \left|e^{-i \beta_1} \braket{0|+} \right|\\
            &= \left|\braket{0|+} \right| = \frac{1}{\sqrt{2}} < 1,
    \end{align}
where we used $\texttt{RZ}(2k \pi)=(-1)^k$ and $\texttt{RX}(2 \beta_1)=e^{-i \beta_1}\ket{+}$. This result contradicts Equation~\eqref{eqn:0-psi}.
\end{proof}

We can exactly compute the maximum value of $Pr((1,2), \gamma, \beta)$ for any $\gamma$ and $\beta$. The detailed computation can be found in the supplementary material~\cite{supplementary-qaoa-runtime}. The idea is to express $Pr((1,2), \gamma, \beta)$ as a multivariate polynomial, replacing the trigonometric expressions by variables. Then, with the
help of the theory of ideals of multivariate polynomials and Gr\"obner bases~\cite{10.1007/3-540-12868-9_99} it is possible to find a polynomial that contains the maximum as one of its roots. This polynomial is $5832 x^3 - 6804 x^2 + 1472 x -8$, and its maximum root is
\begin{align}
\nonumber    \max_{\gamma, \beta} Pr((1,2), \gamma, \beta) &= \frac{\sqrt[3]{\frac{1}{2} \left(111628119168+3824485632 \; i \; \sqrt{1518}\right)}}{17496} \\
\nonumber &+\frac{1174}{\sqrt[3]{\frac{1}{2} \left(111628119168+3824485632 \; i \; \sqrt{1518}\right)}} \\
\nonumber &+\frac{7}{18}\\
    &\approx 0.882385 < 1
\end{align}




\begin{theorem}
\label{thm:qaoa-p2-linear}
    The probability of measuring the optimal solution of the linear Ising model with coefficients $a=(3)=(1,2,3)$ using the QAOA ansatz with $p=2$ holds $Pr(a, \gamma, \beta) < 1$ for all $\gamma$ and $\beta$.
\end{theorem}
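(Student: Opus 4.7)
The plan is to mirror the contradiction strategy used for Theorem~\ref{thm:qaoa-p1-linear}: assume $Pr((1,2,3),\gamma,\beta)=1$ for some parameter vectors of length $p=2$, and show that this forces the ansatz to behave as a $p=1$ ansatz, contradicting Theorem~\ref{thm:qaoa-p1-linear}.

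First, from Equation~\eqref{eqn:prob_opt} the assumption $Pr=1$ factorises over qubits, so for every $a\in\{1,2,3\}$ the single-qubit unitary $U_a=\texttt{RX}(2\beta_2)\,\texttt{RZ}(2\gamma_2 a)\,\texttt{RX}(2\beta_1)\,\texttt{RZ}(2\gamma_1 a)$ must send $\ket{+}$ to $\ket{0}$ up to a phase. Using $\ket{0}=(\ket{+}+\ket{-})/\sqrt{2}$ together with the unitarity of $U_a$ (which forces $U_a\ket{-}=e^{i\psi(a)}\ket{1}$), a necessary consequence is that $U_a\ket{0}$ must be an equal-magnitude superposition of $\ket{0}$ and $\ket{1}$. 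Computing $U_a\ket{0}$ explicitly using $\texttt{RZ}(\theta)\ket{0}=e^{-i\theta/2}\ket{0}$ and the matrix forms of $\texttt{RX}$ and $\texttt{RZ}$, and requiring the amplitude of $\ket{0}$ to have squared modulus $1/2$, yields after standard double-angle simplifications the single necessary condition
\[
\sin(2\beta_1)\sin(2\beta_2)\cos(2\gamma_2 a)=\cos(2\beta_1)\cos(2\beta_2),
\]
which must hold simultaneously for $a=1,2,3$.

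Next I would analyse this system case by case. In the generic regime $\sin(2\beta_1)\sin(2\beta_2)\neq 0$ the equation forces $\cos(2\gamma_2 a)$ to be independent of $a$, so $\cos(2\gamma_2)=\cos(4\gamma_2)=\cos(6\gamma_2)$. Writing $u=\cos(2\gamma_2)$ and using $\cos(2x)=2\cos^2 x-1$ and $\cos(3x)=4\cos^3 x-3\cos x$, a short Chebyshev-polynomial argument shows the only common solution is $u=1$, i.e.\ $\gamma_2\in\pi\mathbb{Z}$. In the degenerate regime $\sin(2\beta_1)=0$ or $\sin(2\beta_2)=0$, the corresponding rotation $\texttt{RX}(2\beta_j)$ is proportional to $I$ or to $X$, and in both sub-cases the extra gate can be absorbed using the identities $X\texttt{RZ}(\theta)X=\texttt{RZ}(-\theta)$ and $X\ket{+}=\ket{+}$.

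In every surviving case the $p=2$ ansatz collapses, up to a global phase that depends only on $a$ and hence does not affect measurement probabilities, into a $p=1$ ansatz of the form $\texttt{RX}(2\beta_{\mathrm{eff}})\,\texttt{RZ}(2\gamma_{\mathrm{eff}} a)\,\ket{+}$ for effective parameters determined by $\gamma_1,\gamma_2,\beta_1,\beta_2$. Since the qubit-wise product gives $Pr((1,2,3),\cdot)\le Pr((1,2),\cdot)$ and Theorem~\ref{thm:qaoa-p1-linear} guarantees $Pr((1,2),\gamma_{\mathrm{eff}},\beta_{\mathrm{eff}})<1$ for any $p=1$ parameters, we conclude $Pr((1,2,3),\gamma,\beta)<1$, contradicting the assumption. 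The main obstacle I anticipate is the bookkeeping in the degenerate $\sin(2\beta_j)=0$ branches, where several sub-cases ($\beta_j\in\pi\mathbb{Z}$ versus $\beta_j\in\pi/2+\pi\mathbb{Z}$ for $j=1,2$) each yield a slightly different effective pair $(\gamma_{\mathrm{eff}},\beta_{\mathrm{eff}})$ — for instance $\gamma_{\mathrm{eff}}=\gamma_1+\gamma_2$ when $\texttt{RX}(2\beta_1)\propto I$ versus $\gamma_{\mathrm{eff}}=\gamma_2-\gamma_1$ when $\texttt{RX}(2\beta_1)\propto X$. An alternative route would be an explicit Gröbner-basis computation as in the $p=1$ case, producing the exact maximum of $Pr((1,2,3),\gamma,\beta)$, but the reduction above is cleaner and directly reuses Theorem~\ref{thm:qaoa-p1-linear}.
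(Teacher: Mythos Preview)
Your approach is correct and takes a genuinely different route from the paper's. Both proofs argue by contradiction and both dispose of the degenerate cases by reduction to Theorem~\ref{thm:qaoa-p1-linear}, but the invariant extracted in the generic branch differs. The paper computes the inner product $\braket{\psi(\gamma,\beta)|\psi(2\gamma,\beta)}$ and uses Cauchy--Schwarz to obtain a constraint on $\gamma_1$ (namely $\sin 2\gamma_1=\sin 4\gamma_1$ once $\sin 2\beta_1\neq 0$), isolates the non-degenerate solutions $\gamma_1=\pm\pi/6$, and then shows these violate the analogous constraint $\sin 2\gamma_1=\sin 6\gamma_1$ coming from the third qubit---finishing directly, without a further reduction to $p=1$. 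You instead exploit that $U_a$ must send $\ket{0}$ to an equal-weight superposition, which yields the $\gamma_1$-free relation $\sin 2\beta_1\sin 2\beta_2\cos(2\gamma_2 a)=\cos 2\beta_1\cos 2\beta_2$; the Chebyshev system $T_1(u)=T_2(u)=T_3(u)$ with $u=\cos 2\gamma_2$ then forces $\gamma_2\in\pi\mathbb{Z}$, and \emph{every} branch (generic and degenerate) ends in the same reduction to a $p=1$ probability. Your argument is more uniform in that sense, while the paper's trades a cleaner generic endgame for fewer degenerate sub-cases.

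One small phrasing point: in the sub-case $\sin 2\beta_2=0$ the leftover factor $\texttt{RZ}(2\gamma_2 a)$ (or $X\,\texttt{RZ}(2\gamma_2 a)$) on the left of $U_a$ is not literally a global phase on the state $U_a\ket{+}$; it becomes a phase only after projecting onto $\bra{0}$ (respectively $\bra{1}$, which you then convert back via $X\ket{+}=\ket{+}$ and $X\texttt{RZ}(\theta)X=\texttt{RZ}(-\theta)$). So the accurate statement is that the \emph{measurement probability} equals that of a $p=1$ ansatz, not that the state itself collapses. This is only a wording issue and does not affect the correctness of the argument.
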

\begin{proof}
    The proof is solved by contradiction, as in Theorem~\ref{thm:qaoa-p1-linear}. Let us assume that $Pr(a, \gamma, \beta) = 1$.  The probability of measuring 0 in each of the qubits must also be 1:
    \begin{align*}
        \nonumber \left| \braket{0 |\psi(\gamma, \beta)} \right|^2 &= \left| \bra{0} \texttt{RX}(2 \beta_2) \; \texttt{RZ} ( 2 \gamma_2) \; \texttt{RX}(2 \beta_1) \; \texttt{RZ} (2 \gamma_1) \ket{+}\right|^2 = 1, \\
       \nonumber  \left| \braket{0 |\psi(2\gamma, \beta)} \right|^2 &= \left| \bra{0} \texttt{RX}(2 \beta_2) \; \texttt{RZ} (4 \gamma_2) \; \texttt{RX}(2 \beta_1) \; \texttt{RZ} (4 \gamma_1) \ket{+}\right|^2 = 1 , \\
       \nonumber  \left| \braket{0 |\psi(3\gamma, \beta)} \right|^2 &= \left| \bra{0} \texttt{RX}(2 \beta_2) \; \texttt{RZ} (6 \gamma_2) \; \texttt{RX}(2 \beta_1) \; \texttt{RZ} (6 \gamma_1) \ket{+}\right|^2 = 1 .
    \end{align*}
    We will focus now in states $\ket{\psi(\gamma, \beta)}$ and $\ket{\psi(2\gamma, \beta)}$ and will find values for $\gamma$ and $\beta$ that make them simultaneously equal to $\ket{0}$, except for an irrelevant global phase. The proportionality between the two states implies that the complex value
    \begin{align*}
        &\braket{\psi(\gamma,\beta)|\psi(2\gamma,\beta)} \\
        &\phantom{abcd}= \bra{+} \texttt{RZ} (-2 \gamma_1)  \texttt{RX}(-2 \beta_1) \texttt{RZ} (2 \gamma_2) \texttt{RX}(2 \beta_1)  \texttt{RZ} (4 \gamma_1) \ket{+} \\
        &\phantom{abcd}= \braket{\psi((\gamma_1),(\beta_1))|\texttt{RZ}(2 \gamma_2)|\psi((2\gamma_1),(\beta_1))}
    \end{align*}
    must have modulus one. Since the $\psi$ states have modulus one and $\texttt{RZ}$ is a unitary operator, by Schwarz-Cauchy inequality, we have that the previous expression has modulus one if and only if
    \begin{align}
      \nonumber  \left| \braket{0|\psi((\gamma_1),(\beta_1))} \right| &= \left| \braket{0|\texttt{RZ}(2 \gamma_2) |\psi((2\gamma_1),(\beta_1))} \right|  \\
      \label{eqn:psi1-2-eq}  &= \left| \braket{0|\psi((2\gamma_1),(\beta_1))} \right| ,
    \end{align}
    where in the last step we used the fact that $\bra{0}\texttt{RZ}(2\gamma_2) = e^{i \gamma_2}\bra{0}$. If we develop Equation~\eqref{eqn:psi1-2-eq} with the help of the matrix definition of the operator we find the following constraint for $\gamma_1$ and $\beta_1$:
    \begin{equation}
    \label{eqn:sins}
        \sin 2 \beta_1 \sin 2 \gamma_1 = \sin 2 \beta_1 \sin 4 \gamma_1 = 2 \sin 2 \beta_1 \sin 2\gamma_1 \cos 2 \gamma_1.
    \end{equation}
    Two solutions to the previous equation are $\beta_1=0$ and $\beta_1=\pi/2$. In the first case, $\texttt{RX}(2\beta_1)$ is the identity and $\ket{\psi(\gamma, \beta)}$ is composed of one rotation around the $Z$ axis followed by one rotation around the $X$ axis. According to Theorem~\ref{thm:qaoa-p1-linear}, these two rotations cannot make $Pr_{opt}((1,2),\gamma,\beta)=1$ and, thus, they will not make $Pr_{opt}((1,2,3),\gamma,\beta)=1$. If $\beta_1=\pi/2$ it is also easy to prove that $\ket{\psi(\gamma, \beta)}$ contains again one rotation around the $Z$ axis followed by one rotation around the $X$ axis. Thus, we can assume that $\sin 2\beta_1 \neq 0$ and we can divide Equation~\eqref{eqn:sins} by $\sin 2\beta_1$ to obtain:
    \begin{equation}
    \label{eqn:sins2}
        \sin 2 \gamma_1 = 2 \sin 2\gamma_1 \cos 2 \gamma_1 .
    \end{equation}
    We can see that if $\gamma_1=0$ or $\gamma_1=\pi/2$ the equality holds. But $\texttt{RX}(2\beta_1)\texttt{RZ}(0)\ket{+}=e^{-i \beta_1}\ket{+}$ and the state $\ket{\psi(\gamma, \beta)}$ again reduces to a rotation in $Z$ followed by one rotation in $X$. The same happens if $\gamma_1=\pi/2$: $\texttt{RX}(2\beta_1)\texttt{RZ}(\pi/2)\ket{+}=e^{i \beta_1}\ket{-}$. Thus, we can assume that $\sin 2\gamma_1 \neq 0$ and we can divide Equation~\eqref{eqn:sins2} by $\sin 2\gamma_1$. The resulting expression is $\cos 2\gamma_1 = 1/2$ and the possible solutions are $\gamma_1 = \pm \pi/6$.

    Now let's introduce the third qubit. We know the state of the third qubit must also fulfill
    \[|\braket{\psi(\gamma,\beta)|\psi(3\gamma,\beta)}|=1\]
    because $\ket{\psi(3\gamma,\beta)}$ needs to be proportional to $\ket{\psi(\gamma,\beta)}$ except for an irrelevant global phase. Following the same steps done before, we find the constraint 
    \begin{equation}
        \sin 2 \gamma_1 = \sin 6 \gamma_1 ,
    \end{equation}
    which is not satisfied by $\gamma_1=\pm \pi/6$. This completes the proof.
\end{proof}

We are now ready to announce the following conjecture.

\begin{conjecture}
\label{conjecture}
    There is a linear Ising model with coefficients' vector $a$ containing $p+1$ different numbers such that the probability of measuring the optimal solution of the Ising model using the QAOA ansatz with $p$ layers holds the inequality $Pr(a, \gamma, \beta) < 1$ for all $\gamma$ and $\beta$.
\end{conjecture}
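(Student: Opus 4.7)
The plan is to proceed by induction on $p$, taking as the candidate for the $p$-layer case the vector $a = (1, 2, \ldots, p+1)$ suggested both by the empirical results and by the patterns in Theorems~\ref{thm:qaoa-p1-linear} and~\ref{thm:qaoa-p2-linear}. The base cases $p = 1$ and $p = 2$ are already established, and the induction hypothesis will be that the conjecture holds for every smaller value. I would then assume for contradiction that there exist $\gamma, \beta \in \mathbb{R}^p$ with $Pr((1, 2, \ldots, p+1), \gamma, \beta) = 1$, and spend the rest of the argument extracting conditions on $(\gamma, \beta)$ until they are inconsistent with the coefficient $p + 1$.

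First I would strip off the outermost layer. The assumption forces $\ket{\psi_k(\gamma, \beta)} = e^{i\omega_k}\ket{0}$ for every $k \in [p+1]$, so the intermediate state $\ket{\phi_k^{(p-1)}} = \prod_{j=p-1}^{1} \texttt{RX}(2\beta_j)\texttt{RZ}(2\gamma_j k)\ket{+}$ must satisfy $\ket{\phi_k^{(p-1)}} = e^{i\omega_k}\texttt{RZ}(-2\gamma_p k)\ket{v}$, where $\ket{v} = \texttt{RX}(-2\beta_p)\ket{0}$ is a single vector determined by $\beta_p$. Geometrically, the $(p-1)$-layer QAOA trajectory (as a function of $k$) is forced to pass through $p+1$ prescribed points on the $\texttt{RZ}$-orbit of $\ket{v}$ on the Bloch sphere. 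This is weaker than requiring the trajectory to hit $\ket{0}$ at $p+1$ points, so it is not a direct reduction to the inductive hypothesis; it is rather a circle-hitting condition that must be refuted by the same algebraic tools used in Theorems~\ref{thm:qaoa-p1-linear} and~\ref{thm:qaoa-p2-linear}.

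Next I would turn the geometric condition into explicit trigonometric identities by iterating the Cauchy-Schwarz peeling of Theorem~\ref{thm:qaoa-p2-linear}. For each pair $k, k' \in [p+1]$, the identity $|\braket{\phi_k^{(p-1)}|\texttt{RZ}(2\gamma_p(k'-k))|\phi_{k'}^{(p-1)}}| = 1$ forces the states before the last layer to be related by a pure $\texttt{RZ}$ rotation, which in turn forces the amplitudes $|\braket{0|\phi_k^{(p-1)}}|$ to be independent of $k$. Expanding these amplitudes as sums of $2^{p-1}$ exponentials in $k$ with frequencies $\sum_{j<p} s_j \gamma_j$, $s_j \in \{-1,+1\}$, produces a finite system of trigonometric equations in $\gamma_1, \ldots, \gamma_{p-1}, \beta_1, \ldots, \beta_{p-1}$. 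Applying the peeling recursively at each inner layer should reduce the admissible $(\gamma_j, \beta_j)$ to a short list of exceptional values, generalising the classification $\gamma_1 \in \{0, \pm \pi/6, \pi/2\}$ that appears in the $p = 2$ proof. The final step is to use the still-unused equation coming from the coefficient $p+1$ to eliminate every remaining candidate, mimicking the way the coefficient $3$ eliminated $\gamma_1 = \pm \pi/6$ for $p = 2$.

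The main obstacle is the combinatorial explosion of the cascade of identities: after peeling $j$ layers the relevant amplitudes are sums of $2^{p-j}$ exponentials with up to $3^{p-j}$ distinct quadratic frequencies, so the polynomial systems grow quickly with $p$ and a naive case analysis becomes unwieldy. To avoid this I would try two complementary strategies. One is an algebraic-geometry argument: view the feasible set as a real variety in $\mathbb{R}^{2p}$ cut out by $2(p+1)$ independent polynomial equations and, after verifying that the specific vector $(1, 2, \ldots, p+1)$ avoids the degeneracies that could drop the codimension, certify inconsistency with a Gröbner basis of the form described in the supplementary material~\cite{supplementary-qaoa-runtime} for $p = 1$. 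The other is to change the candidate vector: taking $a$ with rationally independent entries lets one use equidistribution to replace the pointwise constraints by the stronger statement that the amplitude is constant on a dense set, which then forces all non-trivial Fourier coefficients to vanish and may yield a cleaner contradiction. The hardest part of either route is making the reduction uniform in $p$ rather than a case-by-case computation.
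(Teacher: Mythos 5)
First, a point of calibration: the statement you are addressing is Conjecture~\ref{conjecture}, which the paper itself leaves open --- it proves only the cases $p=1$ and $p=2$ (Theorems~\ref{thm:qaoa-p1-linear} and~\ref{thm:qaoa-p2-linear}) and otherwise offers numerical evidence, so there is no paper proof to compare against. Your text is a research plan in the spirit of the paper's two proofs, not a proof, and it contains concrete gaps. The ``induction on $p$'' framing is vacuous: as you yourself note, the circle-hitting condition obtained after stripping the outer layer is not an instance of the statement for $p-1$, so the inductive hypothesis is never invoked and the argument is really a direct attack whose hard part is untouched. More seriously, the claim that the peeling can be ``applied recursively at each inner layer'' to force $|\braket{0|\phi_k^{(j)}}|$ to be $k$-independent fails beyond the first peel. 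From $\ket{\psi_k}=e^{i\omega_k}\ket{0}$ one does get $\ket{\phi_k^{(p-1)}}=e^{i\omega_k}\texttt{RZ}(-2k\gamma_p)\ket{v}$ and hence $|\braket{0|\phi_k^{(p-1)}}|=|\braket{0|v}|$, but one level deeper
\begin{equation*}
\ket{\phi_k^{(p-2)}}=e^{i\omega_k}\,\texttt{RZ}(-2k\gamma_{p-1})\,\texttt{RX}(-2\beta_{p-1})\,\texttt{RZ}(-2k\gamma_p)\ket{v},
\end{equation*}
whose $\ket{0}$-amplitude modulus depends on $k$ through $\gamma_p$ because $\texttt{RX}(-2\beta_{p-1})$ does not commute with $\texttt{RZ}$. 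So the constraints do not decouple layer by layer; they entangle all of $\gamma_1,\dots,\gamma_p,\beta_1,\dots,\beta_p$, which is exactly why the paper's two-layer case already needed an ad hoc case analysis and why no proof for $p>2$ is known.

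The two fallback strategies do not close this gap either. The variety/Gr\"obner route begs the question: asserting that the $2(p+1)$ real equations cutting out the feasible set in $\mathbb{R}^{2p}$ are ``independent'' for the specific structured vector $(1,2,\ldots,p+1)$ is precisely the content of the conjecture, and a parameter count plus a Gr\"obner computation is inherently a per-$p$ certificate, not an argument uniform in $p$ --- the uniformity is the missing idea, and you do not supply it. The equidistribution idea does not apply to the statement as posed: the conjecture fixes a single vector with $p+1$ coefficients, so there are only $p+1$ pointwise constraints and no dense family of angles over which an amplitude could be forced constant; equidistribution would only enter if one let the number of distinct coefficients grow without bound, which changes the statement. (A small technical slip as well: the frequencies appearing in $\braket{0|\phi_k^{(p-1)}}$ are the linear combinations $k\sum_j s_j\gamma_j$, not ``quadratic'' frequencies.) In short, your outline correctly abstracts the mechanism of the paper's $p=1,2$ proofs, but the recursive-peeling claim is false as stated and the proposed repairs are programmatic, so the conjecture remains unproved by this proposal.
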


\subsection{Experimental evidences}
\label{subsec:empirical}

We can think of an intuitive justification for Conjecture~\ref{conjecture}: it seems impossible to adjust $p$ rotations around the $Z$ axis and $p$ rotations around the $X$ axis of the Bloch sphere to make the $p+1$ independent qubits to reach state $\ket{0}$ from state $\ket{+}$. We have observed for low values of $p$ that with $2p$ rotation angles we can reach state $\ket{0}$ for all qubits from state $\ket{+}$, but the number of ways in which this can be achieved is finite. We think this finite set of solutions for the first $p$ qubits will not work with qubit $p+1$.
However, at the time of writing these lines, we still do not have a general proof for $p > 2$. 

In order to support Conjecture~\ref{conjecture}, we did some experiments. In particular, we consider $Pr((m),\gamma,\beta)$ as a function of the vectors $\gamma$ and $\beta$, both of length $p$, and we maximized $Pr((m),\gamma,\beta)$ using approximated methods. The methods used were Nelder-Mead, Differential Evolution, Simulated Annealing, and Random Search, all of them available in the \texttt{NMaximize} function of the Wolfram software package (version 14.2).\footnote{The code can be found in the supplementary material~\cite{supplementary-qaoa-runtime}.} We applied the maximization for $p\in[5]$ and $m\in[7]$, and for each combination of $m$ and $p$, we selected the maximum value obtained by any of these four methods. The results obtained are shown in Table~\ref{tab:probs}. We can observe that for $m > p$ the probabilities obtained are below 1, except for the case $m=5$ and $p=4$. In this latter case, we cannot discard rounding errors, but it is also important to recall that we are assuming that the vector of coefficients $(m)$, composed of consecutive integer numbers, always works. This assumption does not need to be true in general. Perhaps some particular values of $p$ need a different combination of coefficients. We defer this analysis to future research.

In Table~\ref{tab:probs}, we also observe the exact value we obtained for $p=1$ and $m=2$  (0.882385).
Another interesting observation is that even in the cases where $Pr((m),\gamma,\beta) < 1$ for $m=p+1$, the maximum probability is approaching 1 as $p$ increases. Thus, the base of the exponent in the runtime approaches 1 as $p$ is larger, and we need to make $m$ larger than $p+1$ to have appreciably large numbers in the base of the exponential runtime. Using Equation~\eqref{eqn:runtime} we can compute the base of the exponent for the combinations in Table~\ref{tab:probs}. The results are in Table~\ref{tab:exponent-base}. While the base of the exponent is relatively close to 1 in most of the cases of Table~\ref{tab:exponent-base}, the values increase with $m$, and it is still not clear what the upper bound is (if there is one).


\begin{table}[!ht]
\centering
\caption{Approximated maximum probability of measuring the global optimum for the vector of coefficients $(m)$ in the QAOA variational circuit with $p$ layers.}
\label{tab:probs}
\begin{tabular}{rrrrrr}
\toprule
$m$ & \multicolumn{1}{c}{$p=1$} & \multicolumn{1}{c}{$p=2$} & \multicolumn{1}{c}{$p=3$} & \multicolumn{1}{c}{$p=4$} & \multicolumn{1}{c}{$p=5$} \\
\midrule
1 & 1.000000 & 1.000000 & 1.000000 & 1.000000 & 1.000000 \\
2 & 0.882385 & 1.000000 & 1.000000 & 1.000000 & 1.000000 \\
3 & 0.761904 & 0.944816 & 1.000000 & 1.000000 & 1.000000 \\
4 & 0.652920 & 0.881947 & 0.997275 & 1.000000 & 1.000000 \\
5 & 0.557571 & 0.817512 & 0.974482 & 1.000000 & 1.000000 \\
6 & 0.475241 & 0.754870 & 0.948152 & 0.999680 & 0.999995 \\
7 & 0.404604 & 0.695386 & 0.921055 & 0.965912 & 0.999675 \\    
\bottomrule
\end{tabular}
\end{table}

\begin{table}[!ht]
\centering
\caption{Approximated base of the exponential runtime of an optimally configured QAOA solving linear functions for the vector of coefficients $(m)$ and number of layers $p$.}
\label{tab:exponent-base}
\begin{tabular}{rrrrrr}
\toprule
$m$ & \multicolumn{1}{c}{$p=1$} & \multicolumn{1}{c}{$p=2$} & \multicolumn{1}{c}{$p=3$} & \multicolumn{1}{c}{$p=4$} & \multicolumn{1}{c}{$p=5$} \\
\midrule
1 & 1.00000 & 1.00000 & 1.00000 & 1.00000 & 1.00000 \\
2 & 1.06456 & 1.00000 & 1.00000 & 1.00000 & 1.00000 \\
3 & 1.09488 & 1.01910 & 1.00000 & 1.00000 & 1.00000 \\
4 & 1.11246 & 1.03190 & 1.00068 & 1.00000 & 1.00000 \\
5 & 1.12393 & 1.04112 & 1.00518 & 1.00000 & 1.00000 \\
6 & 1.13200 & 1.04798 & 1.00891 & 1.00005 & 1.00000 \\
7 & 1.13799 & 1.05327 & 1.01182 & 1.00497 & 1.00005 \\
\bottomrule
\end{tabular}
\end{table}

\section{Discussion \label{sec:discussion}}

Quantum computing is expected to provide supremacy over classical computation for some problems~\cite{eickbusch2024demonstratingdynamicsurfacecodes}. QAOA, in particular, has shown supremacy in generating probability distributions that are NP-hard to generate in a classical computer~\cite{ref9}. However, we are interested in optimization and how quantum computers can help optimize functions more efficiently than classical computers. In this sense, the results in the previous section show that QAOA \emph{may not} be the best algorithm for optimization, showing exponential runtime for optimizing some linear functions. Even simple black-box randomized search heuristics running in classical computers have been proven to find the optimal solution to linear functions in polynomial time~\cite{NeumannWitt2022}. In general, linear functions can be solved in $O(n)$ in a classical computer, as Equation~\eqref{eqn:solution-linear} suggests.

We should clarify here that we assume throughout the paper that we can get perfect quantum computers with no noise, and this is the context in which our results hold. We are, however, in the NISQ era, with noisy circuit implementations. Could QAOA be a suitable algorithm for linear functions in the NISQ era? Intuitively, we can think that the effect of noise on the QAOA implementation will ``blur'' the probability distributions obtained by QAOA. It is difficult to believe, in the authors' opinion, that this distortion of the probability distribution in the measurement can have as an effect the concentration of probability on the state representing the optimal solution $\ket{0\ldots 0}$. We need further investigation to answer this question rigorously.

Observe that we say QAOA \emph{may not} be the best algorithm because we only considered linear functions. We cannot discard the possibility that there is a particular family of NP-hard problems for which QAOA can find optimal solutions more efficiently than classical computers.  In a recent paper, Boulebnane and Montanaro~\cite{Boulebnane2024} suggest that QAOA may outperform classical algorithms for $k$-SAT. However, there have been other cases in which a new result has refuted an apparently quantum advantage of QAOA. One example is the work by Barak et al.~\cite{DBLP:journals/corr/BarakMORRSTVWW15} focused on the Max-$k$XOR constraint satisfaction problem. Quoting Boulebnane and Montanaro, ``although there have been many [...] works on the theoretical and empirical performance of QAOA for optimization problems [...], none has yet shown an unambiguous advantage over the best classical algorithms.''
Future work should investigate this possibility. However, if Conjecture~\ref{conjecture}  is true, we can discard the possibility that QAOA achieves subexponential runtimes for the family of NP-hard problems, since linear functions are included in this family. In this sense, this result aligns with the one of Harrow and Napp~\cite{PhysRevLett.126.140502} that show that \emph{any} hybrid classical-quantum algorithm in a black-box setting, that is, based only on the evaluation of the objective function, requires $\Omega(n^3/\varepsilon^2)$ calls to the oracle computing the objective function, where $n$ is the number of qubits and $\varepsilon$ is the maximum tolerated error to the optimal solution. This means that as we approach the optimal solution, the runtime increases.

Does it mean that a quantum computer cannot solve linear functions efficiently? No, it is very easy to design a quantum circuit with $n$ gates computing the optimal solution to a linear function. Fig.~\ref{fig:solving-linear} shows one of these circuits. Thus, we need better quantum optimization algorithms for gate-based quantum computers. We can try to find such algorithms by looking at the quantum algorithms with proved improvements over the classical methods. One such algorithm is Grover's search, which is also a black-box algorithm (it does not consider the details of the problem, but only its evaluation function). An optimization problem can be transformed into a series of decision problems using dichotomic search. Then, Grover’s search could solve each decision problem, which has a quadratic speedup compared to the classical search algorithm. However, the original Grover's algorithm assumes that the search space is the set of all binary strings of length $n$, which size is $O(2^n)$. Then, Grover's search runtime is $O(1.4142^n)$. We have classical algorithms that explore the $O(2^n)$ search space more efficiently for some particular problems, like SAT or TSP.

\begin{figure}[!ht]
\centering
\begin{quantikz}
\lstick[2]{$a_1$} &  & \wire[d][1][dotted]{a}  &  & & & \\
 &  \ctrl{6} &  &  &  & &\\
\lstick[2]{$a_2$} &  &  & \wire[d][1][dotted]{a} &  & &\\
 &  & \ctrl{5} &  &  & &\\
 \wave &&&&&& \\
\lstick[2]{$a_n$} &  &  &  & \wire[d][1][dotted]{a} & & \\
 &  &  & \ctrl{4} &  & &\\ 
\lstick{$\ket{0}$} & \targ{} &  & &  &\meter{} & \rstick[4]{output}\\
\lstick{$\ket{0}$} &  & \targ{} &  & &\meter {} &\\
 \wave &&&&&& \\
\lstick{$\ket{0}$} &  &  & \targ{} & &\meter{} &
\end{quantikz}
    \caption{Quantum circuit computing the optimal solution to a linear Ising model. The first $n$ registers contain the coefficients of the linear Ising model in two's complement. The most significant bit of each register is the one at the bottom and represents the sign of the integer number. The sign bit of each register controls one of the output qubits.}
    \label{fig:solving-linear}
\end{figure}
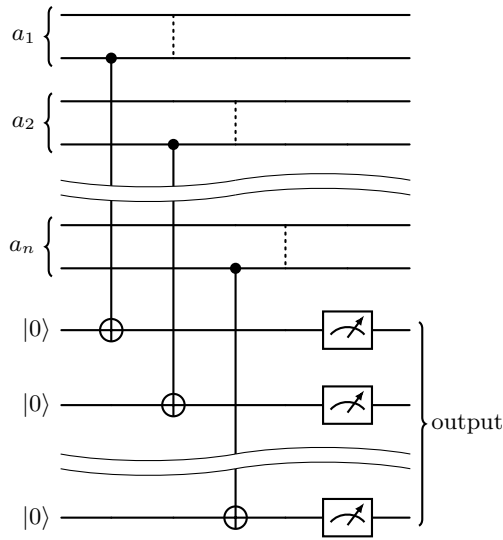

In the last decades, new quantum algorithms have been published that consider these classical algorithmic improvements and provide a quadratic speedup compared to the corresponding classical algorithm. One example is the framework proposed by Ashley Montanaro for the family of constraint satisfaction problems~\cite{Montanaro2019}.

Unfortunately, quadratic speedup is insufficient to claim an advantage of quantum computers over classical ones.
In a recent article in \textit{Communications of the ACM}, Hoefler et al.~\cite{HoeflerHT23} highlight that for noiseless gate-based quantum computers to compete with current GPUs, the acceleration obtained in algorithms must be more than quadratic. This acceleration should be cubic, quartic, or higher. The work compares current GPUs against ideal noiseless quantum computers (unavailable today).
However, Bennett et al.~\cite{BennettBBV97} proved that the quadratic speedup achieved by Grover’s search is optimal in a black box setting. Moreover, it has been proven that solving any more general predicate on a quantum computer that requires querying an oracle cannot achieve a speedup greater than polynomial~\cite{BealsBCMW98}. In particular, it cannot be faster than the sixth root of the execution time on a classical computer~\cite[p. 276]{ref1}. This leaves room to surpass current parallel processing, but it does not establish quantum supremacy as clearly as the exponential speedup achieved with the quantum Fourier transform. To achieve more than polynomial acceleration, it is necessary to \emph{open the box} and abandon black-box optimization.

In the search for the building blocks of new quantum algorithms for optimization, it is natural to ask if we can take advantage of the algorithmic approaches that already showed an exponential speedup, like the mentioned quantum Fourier transform. A first step could be finding ways to use the Fourier transform in optimization with classical algorithms. One example is the algorithm proposed by Risi Kondor to optimize the Quadratic Assignment Problem using its group-theoretic Fourier transform~\cite{Kondor10}. Another example is the exponential speedup obtained by the Partition Crossover operator~\cite{TintosWC15} during recombination in evolutionary algorithms and iterated local search. More recently, it has been proposed that the group-theoretic Fourier transform can be at the core of the design of efficient operators used by classical algorithms during the search~\cite{DBLP:conf/ppsn/ChicanoWOT24}. Interestingly, at least one group-theoretic Fourier transform has an exponential speedup in gate-based quantum computers: the Wash-Hadamard transform. This transform requires exponential time in a classical computer and it can be implemented in a gate-based quantum computer using just $n$ Hadamard gates (one per qubit). The Deutsch-Jozsa~\cite{DeutschJozsa92} and Bernstein-Vazirani~\cite{doi:10.1137/S0097539796300921} algorithms leverage the Walsh-Hadamard transform to get an exponential speedup. The celebrated and feared Shor's algorithm~\cite{doi:10.1137/S0097539795293172} is also based on a Quantum Fourier Transform. Thus, a Fourier transform could guide the design of new promising quantum optimization algorithms in the context of noiseless gate-based quantum computers.

\section{Conclusions}
\label{sec:conclusions}

We have seen in this paper that QAOA has trouble finding the global optimal solution for linear functions, requiring exponential time if the number of layers $p$ is fixed. This result raises new questions and opens new promising future lines of research. While we have provided the proofs of Conjecture~\ref{conjecture} for $p=1$ and $p=2$, we still miss a proof for $p>2$. We also wonder if the base of the exponential runtime is upper-bounded. It would be interesting to know if QAOA can provide a clear advantage over classical algorithms for some particular families of NP-hard problems, even if the runtime is exponential. Observe that there is no polynomial-time classical algorithm to solve NP-hard problems, and in the field of exponential algorithms, reducing the base of the exponential is an advantage~\cite{10.5555/1941886}.

The discussion of Section~\ref{sec:discussion} suggests that any problem that can be efficiently solved with the help of a Fourier transform could be translated into a quantum algorithm with a clear advantage over classical algorithms because noiseless gate-based quantum computers can compute different kinds of Fourier transforms in polynomial time. Future research should focus on finding practical ways of using the Fourier transform to solve optimization problems and designing quantum algorithms to compute the general group-theoretic Fourier transform over finite groups.

\section*{Acknowledgements}
This research is sponsored by Universidad de Málaga under grant PAR~4/2023. Grammarly, Writefull, and ChatGPT have been used during the writing process to revise the grammar.


\end{document}